\newtheorem{theorem}{Theorem}
\title{COMPRESSIVE SIGNAL PROCESSING WITH CIRCULANT SENSING MATRICES}
\name{Diego Valsesia \qquad Enrico Magli}
\address{Politecnico di Torino (Italy) -- Dipartimento di Elettronica e Telecomunicazioni \thanks{This work is supported by the European Research Council under the European Community’s Seventh Framework Programme (FP7/2007-2013) / ERC Grant agreement n.279848.}}
\begin{document}
%\ninept
%
\maketitle
\begin{abstract}
Compressive sensing achieves effective dimensionality reduction of signals, under a sparsity constraint, by means of a small number of random measurements acquired through a sensing matrix. In a signal processing system, the problem arises of processing the random projections directly, without first reconstructing the signal. In this paper, we show that circulant sensing matrices allow to perform a variety of classical signal processing tasks such as filtering, interpolation, registration, transforms, and so forth, directly in the compressed domain and in an exact fashion, \emph{i.e.}, without relying on estimators as proposed in the existing literature. The advantage of the techniques presented in this paper is to enable direct measurement-to-measurement transformations, without the need of costly recovery procedures.
\end{abstract}
\begin{keywords}
Compressed sensing, circulant matrix, compressive filtering
\end{keywords}
\vspace*{-0.2cm}
\section{Introduction}
\label{sec:intro}
\vspace*{-0.2cm}
Compressive sensing (CS) \cite{donoho2006cs} has successfully shown that a small number of measurements, obtained through a suitable sensing matrix, is able to acquire a sparse signal and enable its exact recovery from its measurements. Remarkably, knowledge of the sparsity basis is not explicitly needed at the moment of acquisition, but only for recovery. Since the beginnings of CS, it was clear that working in the compressed domain presented advantages in terms of complexity because many tasks do not really require recovery but rather to solve inference problems. The seminal paper by Davenport \emph{et al.} \cite{CompressiveSP} on signal processing with compressive measurements provided techniques to perform detection, classification, estimation and interference cancellation tasks. In this paper, we focus on a different, although similar, problem. The goal is to perform classical signal processing operations such as filtering, interpolation and others directly on the compressive measurements, thereby avoiding to first apply the computationally expensive reconstruction algorithms in order to later apply the operator on the reconstructed samples. The resulting operation is a measurement-to-measurement transformation, which takes as input the acquired measurements and outputs the measurements of the processed signal, without ever performing reconstruction. This is achieved by using a circulant sensing matrix, or a distributed circulation property in single node and multi-node scenarios respectively. Conversely, [2] is concerned with parameter estimation from the measurements. This can also be applied to the estimation of transform or filtered coefficients, but the estimate is always affected by some error, whereas our proposed method yields an exact solution.

%\vspace*{-0.2cm}
\section{BACKGROUND AND NOTATION}
\label{sec:bkg}
\vspace*{-0.25cm}
We use the subscript $A_{[a,b]}$ to denote a submatrix of $A$ composed by the rows indexed by interval $[a,b]$. The subscript $\mathbf{x}_{\rightarrow a}$ denotes a circular shift to the right (downwards for column vectors) by $a$ positions.

In the standard CS framework, introduced in \cite{candes2006nos}, a signal $\mathbf{x}\in\mathbb{R}^{n\times 1}$ which has a sparse representation in some basis $\Psi\in\mathbb{R}^{n\times n}$, \textit{i.e.},
$ \mathbf{x} = \Psi \boldsymbol{\theta},\quad \Vert \boldsymbol{\theta} \Vert_0 = k,\quad k\ll n $,
can be recovered by a smaller vector $\mathbf{y}\in\mathbb{R}^{m\times 1}$, $k<m<n$, of linear measurements $\mathbf{y} = \Phi\mathbf{x}$, where $\Phi\in\mathbb{R}^{m\times n}$ is the \emph{sensing matrix}. The optimum solution, seeking the sparsest vector compliant with $\mathbf{y}$, is an NP-hard problem, but one can resort to a convex optimization reconstruction problem by minimizing the $l_1$ norm, provided enough measurements ($m\sim k\log(n/k)$) are available.
This algorithm is robust when it is used to reconstruct signals which are not exactly sparse, but rather compressible, meaning that the magnitude of their sorted coefficients (in some basis $\Psi$) decays exponentially.

The most used sensing matrices in the literature are random matrices whose elements are i.i.d. Gaussian random variables. However, circulant matrices have been recently proposed \cite{BajwaToeplitz}\cite{rauhutcirculant} because they can be implemented easily or arise naturally in various problems. Circulant matrices have been shown to be as effective as Gaussian sensing matrices, when the signal is acquired in its sparsity domain and have a slightly reduced performance otherwise \cite{YinPracticalCirculant}. The form of such matrices is
\vspace*{-0.1cm}
\begin{align*}
\Phi =
\left[\begin{array}{ccccc}
\phi_{1} & \phi_{2} & \phi_{3} & \cdots & \phi_{n}\\
\phi_{n} & \phi_{1} & \phi_{2} & \cdots & \phi_{n-1}\\
 &  & \vdots\\
\phi_{n-m+2} & \phi_{n-m+3} & \phi_{n-m+4} & \cdots & \phi_{n-m+1}
\end{array}\right]
\vspace*{-0.3cm}
\end{align*} 
where the first row, called seed in the following, is drawn at random (\emph{e.g.}, i.i.d. Gaussian). Other constructions (\emph{e.g.}, randomly selecting $m$ rows from the full matrix) are reported in \cite{YinPracticalCirculant}, but we will not consider them in this paper.
The $n$-point circular convolution of two sequences $\mathbf{x} \in \mathbb{R}^{n}$ and $\mathbf{h} \in \mathbb{R}^{N_f}$ is $\mathbf{x}_f = \mathbf{h} \circledast \mathbf{x} = H\mathbf{x}$, where $H$ is an $n\times n$ circulant matrix. 

The commutator of two linear operators $A \in \mathbb{R}^{n \times n}$ and $B \in \mathbb{R}^{n \times n}$ is zero if and only if the two operators commute and is defined as
\begin{align}
\label{eq:commutator}
\left[ A,B \right] = AB - BA.
\end{align}

\vspace*{-0.6cm}
\section{Single-node compressive filtering}
\vspace*{-0.3cm}
\label{sec:singlenode}
\begin{figure}[t]
\begin{minipage}[b]{1.0\linewidth}
\centerline{\includegraphics[width=0.99\columnwidth]{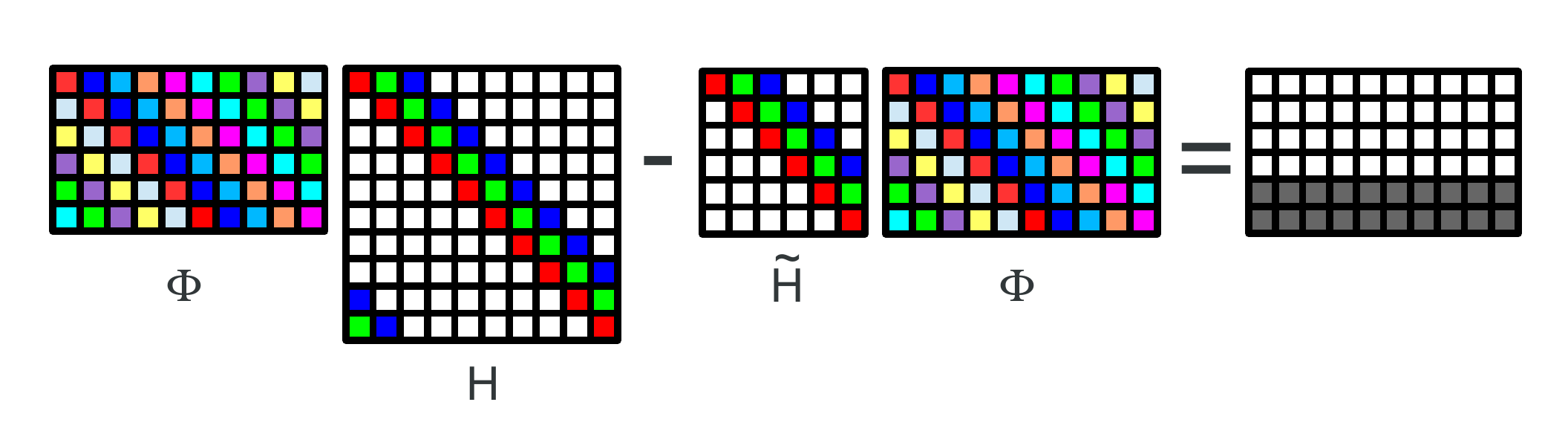}}
\vspace{-0.3cm}
\caption{Partial commutation condition}\medskip
\end{minipage}
\label{img:partial_comm}
\vspace*{-1.0cm}
\end{figure}

The main goal of this paper is to show that it is possible to compute a measurement-to-measurement transformation that allows us to find the measurements of a \emph{filtered} version of the acquired signal, directly from its linear measurements.
The classic filtering operation consists in convolving the filter impulse response with the signal, and circular convolution can be used if periodic boundary conditions are considered. This amounts to taking the product between a square circulant matrix $H$ and the signal of interest. It can be observed that square circulant matrices form a commutative group with respect to matrix product, hence if we used a square circulant sensing matrix $\Phi$ we would have $\mathbf{y}_f = \Phi H\mathbf{x} = H\Phi\mathbf{x} = H\mathbf{y}$. CS uses rectangular sensing matrices in order to achieve compression, hence $\Phi$ is a ``partial'' circulant matrix. Nevertheless, we can still exploit a partial commutation property. Let us define an extension of the commutator introduced in \eqref{eq:commutator} to handle rectangular matrices. The \emph{m-partial commutator} is defined as:
\begin{align}
\label{eq:partial_comm}
\left[ \Phi,H \right]_m = \Phi H - \tilde{H}\Phi 
\end{align}
where $\tilde{H}$ is the submatrix of $H$ restricted to the first $m$ rows and $m$ columns. The following theorem shows that obtaining the measurements of a filtered version of the signal is as straightforward as filtering the measurements. The price to pay for this operation is the ``corruption'' of $N_f-1$ measurements in the positions corresponding to the non-zero rows of $\left[ \Phi,H \right]_m$, thus the impulse response should not be too long to avoid corruption of too many measurements, and acquisition should take an extra $N_f-1$ measurements to account for this. A graphical depiction of Theorem \ref{thm:partial_commutator} is shown in Fig. 1. 

\begin{theorem}
\label{thm:partial_commutator}
Let $H$ be an $n\times n$ circulant matrix obtained from an impulse response of length $N_f$, $\Phi$ be an $m \times n$ partial circulant sensing matrix, measurements $\mathbf{y}=\Phi\mathbf{x}$, and measurements of the filtered signal $\mathbf{y}_f=\Phi H\mathbf{x}$. Then, $\left( \tilde{H}\mathbf{y} \right)_i = \left( \mathbf{y_f} \right)_i$ \ if and only if $i\in \left[ 1,m-N_f+1 \right]$.
\end{theorem}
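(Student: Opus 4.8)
The plan is to reduce the claimed coordinate-wise identity to a statement about the support of the rows of the partial commutator $[\Phi,H]_m$. Since $\mathbf{y}_f-\tilde H\mathbf{y}=(\Phi H-\tilde H\Phi)\mathbf{x}=[\Phi,H]_m\mathbf{x}$, the identity $(\tilde H\mathbf{y})_i=(\mathbf{y}_f)_i$ holds for every signal $\mathbf{x}$ exactly when the $i$-th row of $[\Phi,H]_m$ vanishes. So the whole theorem becomes the question of determining for which $i$ that row is zero.

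First I would embed $\Phi$ into the full $n\times n$ circulant matrix $\Phi^{\mathrm{full}}$ generated by the same seed, so that $\Phi=\big(\Phi^{\mathrm{full}}\big)_{[1,m]}$. Because square circulant matrices commute (the property already invoked for the square case in the text), $\Phi^{\mathrm{full}}H=H\Phi^{\mathrm{full}}$. Writing $\rho_j$ for the $j$-th row of $\Phi^{\mathrm{full}}$, the $i$-th row of $\Phi H$ (for $i\le m$) equals the $i$-th row of $H\Phi^{\mathrm{full}}$, namely $\sum_{j=1}^{n}H_{ij}\rho_j$, whereas the $i$-th row of $\tilde H\Phi$ only involves the first $m$ columns and gives $\sum_{j=1}^{m}H_{ij}\rho_j$. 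Subtracting, the $i$-th row of $[\Phi,H]_m$ collapses to the tail sum $\sum_{j=m+1}^{n}H_{ij}\rho_j$; that is, only the entries of $H$ that fall outside the top-left $m\times m$ block survive.

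The next step is a pure sparsity count on $H$. Since $H$ is circulant with an impulse response of length $N_f$, each row has only $N_f$ consecutive nonzero entries (modulo the cyclic wrap-around). For a row index $i\le m$, these nonzero entries reach into the discarded columns $j>m$ precisely when the band extends past column $m$, which happens exactly for $i\ge m-N_f+2$; for $i\le m-N_f+1$ the whole band stays within the first $m$ columns and the tail sum is empty. Hence the $i$-th row of $[\Phi,H]_m$ is identically zero iff $i\in[1,m-N_f+1]$, which establishes the ``if'' direction immediately for every $\mathbf{x}$.

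For the ``only if'' direction I must check that when $i\ge m-N_f+2$ the surviving tail sum is genuinely a nonzero row, so that equality fails for some (indeed generic) $\mathbf{x}$. This is the one place where more than bookkeeping is needed: the tail is a combination $\sum_{j>m}H_{ij}\rho_j$ of distinct rows of $\Phi^{\mathrm{full}}$ with at least one nonzero filter tap as coefficient, so it can vanish only through a linear dependence among those rows. I expect this to be the main obstacle, and the way to dispose of it is to note that for a generic (e.g.\ i.i.d.\ Gaussian) seed $\Phi^{\mathrm{full}}$ is invertible, so its rows $\rho_{m+1},\dots,\rho_n$ are linearly independent and no cancellation is possible; the row is therefore nonzero and the equality is destroyed, completing the characterization. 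I would also double-check the orientation of the convolution and shift conventions, since that is exactly what pins the corrupted indices to the tail $[m-N_f+2,m]$ rather than the head.
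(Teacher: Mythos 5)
Your argument mirrors the paper's own proof: both embed $\Phi$ into the full circulant matrix $C$ generated by the same seed, invoke the commutativity of square circulant matrices to rewrite the first $m$ rows of $\Phi H$ as rows of $HC$, and use the bandwidth-$N_f$ support of the rows of $H$ to conclude that the residual (tail) terms vanish exactly for $i\le m-N_f+1$. The only substantive difference is that you explicitly justify the non-vanishing of the last $N_f-1$ rows of $\left[\Phi,H\right]_m$ via the linear independence of the rows of a generically invertible circulant matrix, a point the paper simply asserts from $\left(H_T^T\right)_{\left[m,n\right]}\neq 0$.
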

\begin{proof}
First, we notice that it is enough to prove that only that the first $m-N_f+1 =m'$ rows of the $m$-partial commutator are zero. Let us call $C \in \mathbb{R}^{n\times n}$ the square circulant matrix having the same seed as $\Phi$, $H_T = H_{\left[ 1,m' \right]}$ and call $\tilde{H}_T = \tilde{H}_{\left[ 1,m' \right]}$. Then, $\left( \left[ C,H \right] \right)_{\left[ 1,m' \right]} = \left( CH \right)_{\left[ 1,m' \right]} - \left( HC \right)_{\left[ 1,m' \right]} = \left( \left[ \Phi,H \right]_m \right)_{\left[ 1,m' \right]} + \left( \tilde{H}\Phi \right)_{\left[ 1,m' \right]} - \left( HC \right)_{\left[ 1,m' \right]} =  \left( \left[ \Phi,H \right]_m \right)_{\left[ 1,m' \right]} +  \tilde{H}_T \Phi - H_T C  = 0$. Since, by assumption, the first row of matrix $H$ has at most $N_f$ non-zero entries (all in the first $N_f$ positions), we have $\left( H_T^T \right)_{\left[ m,n \right]}=0$ and $\tilde{H}_T = \left( \left( H_T^T \right)_{\left[ 1,m \right]} \right)^T$, so that $H_T C = \left( \left( H_T^T \right)_{\left[ 1,m \right]} \right)^T \Phi = \tilde{H}_T \Phi$. We thus find $\left( \left[ \Phi,H \right]_m \right)_{\left[ 1,m' \right]} = 0$. The last $N_f-1$ rows of $\left[ \Phi,H \right]_m$ are surely non-zero because $\left( H_T^T \right)_{\left[ m,n \right]}\neq 0$.
\end{proof}

\vspace*{-0.4cm}
\section{Multi-node compressive filtering}
\vspace*{-0.3cm}
\begin{figure}[t]
\begin{minipage}[b]{1.0\linewidth}
\centerline{\includegraphics[width=0.87\columnwidth]{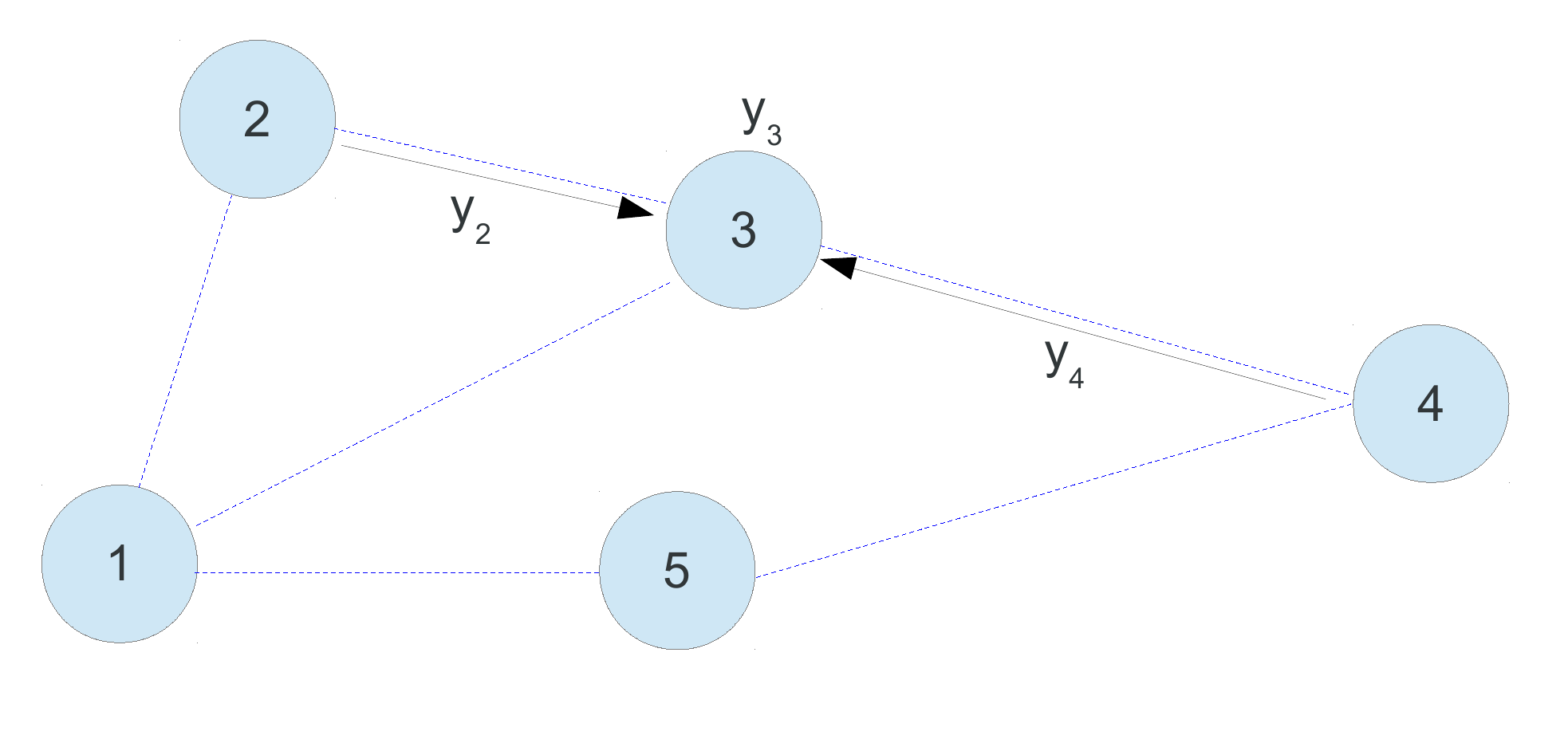}}
\vspace{-0.5cm}
\caption{Second derivative computation in multi-node scenario}\medskip
\end{minipage}
\label{distro_second}
\vspace*{-1.0cm}
\end{figure}

In this section we deal with an extension of the previous framework, which can encompass two scenarios. The first one considers a multi-node distributed system in which agents cooperate by sharing some measurements in order to perform the filtering operation. The second one is a single-node system having as many measurements as those from all the nodes in the previous interpretation, but with a structured sensing matrix. Let us describe the principles of operation referring to the former scenario for ease of explanation, but always keeping in mind the latter. 

In the multi-node scenario, each node is equipped with a random sensing matrix (\emph{e.g.}, Gaussian), \emph{not} circulant. However, nodes are assigned matrices that have a circulant property over the network. Numbering the nodes from 1 to $J$, this means that $\Phi^{(1)}$ is drawn at random, then row $i$ in matrix $\Phi^{(j)}$ is a circularly right-shifted version of the same row in $\Phi^{(j-1)}$. If all the nodes observe the same signal $\mathbf{x}$, then they can collaborate to perform the filtering operation in the compressed domain. Given an impulse response $\mathbf{h}$, node $j$ can obtain the measurements of the filtered signal $\mathbf{y}_f^{(j)} = \Phi^{(j)}\mathbf{x}_f$ as:
\vspace*{-0.2cm}
\begin{align}
\label{distro_filter}
\mathbf{y}_f^{(j)} = \sum_{i=0}^{N_f-1} h_i \mathbf{y}^{(j-i)} \qquad \text{for } j \in \left[ N_f , J \right]
\vspace*{-0.3cm}
\end{align}

We can derive a commutation theorem similar to Thm. \ref{thm:partial_commutator}, by defining the $(J,m)$-\emph{distributed partial commutator}:

\vspace*{-0.3cm}
\begin{align}
\label{eq:distributed_comm}
\left[ \tilde{\Phi},H \right]_{J,m} = \tilde{\Phi}H - \left( H_J \otimes I_m \right) \tilde{\Phi}
\vspace*{-0.2cm}
\end{align}
where $\tilde{\Phi} = \left[ \Phi^{(1)^T} \Phi^{(2)^T} \cdots \Phi^{(J)^T} \right]^T$, $H_J$ is a submatrix of $H$ restricted to the first $J$ rows and columns, $I_m$ is the $m \times m$ identity matrix, and $\otimes$ denotes Kronecker product.

\begin{theorem}
\label{thm:ditributed_partial_commutator}
Given $J$ nodes, the $l$-th measurement of the filtered signal at node $j$ is $\left( \mathbf{y}_f^{(j)} \right)_l = \sum_{i=0}^{N_f-1} h_i \left( \mathbf{y}^{(j-i)} \right)_l$ if and only if $j > N_f-1$. 
\end{theorem}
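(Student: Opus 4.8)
The plan is to recast the stated per-node, per-sample identity as the vanishing of one block row of the distributed partial commutator \eqref{eq:distributed_comm}, and then to count which block rows vanish by reusing the mechanism behind Theorem \ref{thm:partial_commutator}. First I would record the block structure of \eqref{eq:distributed_comm}. Since $\tilde\Phi$ stacks $\Phi^{(1)},\dots,\Phi^{(J)}$ as $m$-row blocks, the $j$-th block row of $\tilde\Phi H$ is $\Phi^{(j)}H$, while the $j$-th block row of $\left( H_J\otimes I_m\right)\tilde\Phi$ is $\sum_{j'=1}^{J}\left( H_J\right)_{jj'}\Phi^{(j')}$. Applying both to $\mathbf{x}$ turns them into $\mathbf{y}_f^{(j)}$ and $\sum_{j'}\left( H_J\right)_{jj'}\mathbf{y}^{(j')}$, so, once one checks that $\left( H_J\right)_{jj'}=h_{j-j'}$ selects exactly the taps $h_0,\dots,h_{N_f-1}$, the claim $\left(\mathbf{y}_f^{(j)}\right)_l=\sum_{i=0}^{N_f-1}h_i\left(\mathbf{y}^{(j-i)}\right)_l$ for all $l$ is equivalent to the $j$-th block row of $\left[\tilde\Phi,H\right]_{J,m}$ being zero.

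The core observation is that for each fixed sample index $l\in[1,m]$ the family of $l$-th rows of $\Phi^{(1)},\dots,\Phi^{(J)}$ is itself a $J\times n$ partial circulant matrix $\Phi_l$: its seed is the $l$-th row of $\Phi^{(1)}$ and, by the distributed circulation property, its $j$-th row is that seed circularly shifted by $j-1$. Hence the vector of $l$-th measurements across nodes equals $\Phi_l\mathbf{x}$ and its filtered version equals $\Phi_l H\mathbf{x}$, so the per-$l$ statement is exactly Theorem \ref{thm:partial_commutator} with the substitutions $m\mapsto J$ and $\tilde H\mapsto H_J$. I would therefore replay the commutator computation of that proof: writing $H$ as a weighted sum of circular shifts, the product of a shifted seed with $H$ is the same weighted sum of shifts of that seed, because circular convolution commutes with circular shift. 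The block rows that fail to vanish after the restriction to $[1,J]$ are precisely those for which one of these shifts references an index outside the node range $[1,J]$.

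The boundary count then finishes the argument. For $j>N_f-1$ every index $j-i$ with $0\le i\le N_f-1$ lies in $[1,J]$, so $H_J$ retains all $N_f$ taps and $\sum_{i=0}^{N_f-1}h_i\Phi^{(j-i)}$ reproduces $\Phi^{(j)}H$ exactly; the block row vanishes and the identity holds. For $j\le N_f-1$ the terms with $j-i\le 0$ reference nodes that do not exist, $H_J$ truncates them, and the leftover combination of shifted seed rows, which does not appear in $\Phi^{(j)}H$, is generically nonzero because the seed is random and the retained $h_i$ are nonzero; this gives the ``only if'' direction. I expect the main obstacle to be exactly this index bookkeeping, specifically the \emph{direction} of the boundary: a naive application of Theorem \ref{thm:partial_commutator} corrupts the \emph{last} $N_f-1$ rows of $\Phi_l$, whereas \eqref{distro_filter} and the present statement corrupt the \emph{first} $N_f-1$ nodes. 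Reconciling the two requires carefully aligning the sense in which the per-node shift accumulates with the support of the filter (and tracking any circulant wraparound inside the $J\times J$ block $H_J$), so that the failing set is pinned to $j\le N_f-1$, i.e.\ the identity holds iff $j>N_f-1$.
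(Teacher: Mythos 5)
Your proposal is correct and takes essentially the approach the paper itself prescribes: the paper omits this proof, saying only that it ``follows that of Thm.~\ref{thm:partial_commutator}, now using the $(J,m)$-distributed partial commutator'', and your block-row reading of \eqref{eq:distributed_comm} together with the observation that the $l$-th rows of $\Phi^{(1)},\dots,\Phi^{(J)}$ form a $J\times n$ partial circulant matrix is exactly that argument made explicit. The direction flip you flag is a real tension, but it lies in the paper's own conventions rather than in your argument --- the proof of Theorem~\ref{thm:partial_commutator} places the filter support in the first $N_f$ columns of the first row of $H$ (corrupting the \emph{last} $N_f-1$ rows), whereas \eqref{distro_filter} and the present statement implicitly use the causal convention $\left( H\right)_{jj'}=h_{j-j'}$ (corrupting the \emph{first} $N_f-1$ nodes via wraparound truncated by $H_J$), and under that convention your boundary count is the right one.
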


We skip the proof for brevity and because it follows that of Thm. \ref{thm:partial_commutator}, now using the $(J,m)$-distributed partial commutator.

The main advantage of this multi-node framework is that each node has a fully random matrix on its own. Therefore, if its number of measurements is sufficiently high, it can perform reconstruction without suffering the slight performance degradation shown by circulant matrices when the sparsity domain is not the identity. Moreover, no measurement is corrupted in this process, except for $N_f-1$ nodes, which cannot compute the filtered measurements with respect to their own sensing matrix.  Moreover, multiple nodes can stack their measurements, corresponding to the second scenario discussed at the beginning of the section. In this case, the resulting sensing matrix would be obtained by stacking the individual sensing matrices, thus having a block-circulant structure (a total of $m$ independent rows, used as seed of $J$ circulant blocks). 

\vspace*{-0.3cm}
\section{Main applications of the proposed framework}
\vspace*{-0.3cm}
The following sections describe some applications of the presented framework. For reasons of brevity, we shall focus on the single-node case, but all the results can be readily extended to the multi-node scenario. The general case of filtering using circular convolution with a sequence $\mathbf{h}$ has already been treated in the previous section. The following examples consider problems that can be cast in a similar manner.

\vspace*{-0.15cm}
\subsection{Finite differences}
\vspace*{-0.15cm}
It is of interest to apply a discretized differential operator to the signal and compute the measurements of the resulting sequence directly from the original measurements. Let us consider a discretization of the second derivative, so that the $n$-th sample in the output is computed, in vector form, as:
\begin{align*}
\vspace*{-0.2cm}
\mathbf{x}_{f} = \mathbf{x}_{\rightarrow 1} - 2\mathbf{x} + \mathbf{x}_{\leftarrow 1} 
\end{align*}
where the subscript $\rightarrow k$ denotes a circular right-shift by $k$ positions.
It is easy to check that the measurements $\mathbf{y}_{f} = \Phi\mathbf{x}_{f}$ can be obtained directly in the compressed domain as:
\begin{align*}
\mathbf{y}_{f} = \mathbf{y}_{\rightarrow 1} - 2\mathbf{y} + \mathbf{y}_{\leftarrow 1} 
\end{align*}
with the corruption of the first measurement.
In the multi-node setting it is immediate to derive that node $j$ can request the measurements of nodes $j-1$ and $j+1$ in order to compute $\Phi^{(j)}\mathbf{x}_f$ directly in the compressed domain:
\begin{align*}
\mathbf{y}_{f}^{(j)} = \mathbf{y}^{(j-1)} - 2\mathbf{y}^{(j)} + \mathbf{y}^{(j+1)} 
\end{align*}
This example is shown in Fig. 2.

\vspace*{-0.2cm}
\subsection{Compressive interpolation}
\vspace*{-0.2cm}
\label{cmp_interpolation}
The objective is to compute the measurements of an interpolated version of the signal of interest from its compressive measurements. Interpolation is modelled as the cascade of an upsampler and a low-pass filter. Therefore it is possible to perform the operation in the compressed domain if we assume that the original signal was acquired by a circulant matrix with a few zeroed columns, whose number depends on the upsampling factor. The low-pass filter is implemented by the standard compressive filtering technique explained in the previous sections, bearing in mind that $N_f-1$ measurements will be corrupted.
As an example, let us consider a piecewise linear interpolator with an interpolation factor of 2. It is easy to verify that the interpolated sequence and the corresponding measurement-domain transformation are:
\begin{align*}
\mathbf{x}_{\mathrm{INT}} = \frac{1}{2}\mathbf{x}_{\rightarrow 1} + \frac{1}{2}\mathbf{x}_{\leftarrow 1} + \mathbf{x} \\
\mathbf{y}_{\mathrm{INT}} = \frac{1}{2}\mathbf{y}_{\rightarrow 1} + \frac{1}{2}\mathbf{y}_{\leftarrow 1} + \mathbf{y}
\end{align*}
In this example the original measurements have been acquired as $\mathbf{y} = \hat{\Phi} \mathbf{x}$, where the $i$-th column of $\hat{\Phi}$ is $\hat{\Phi}_i = \Phi_{2i-1}$ for a given circulant matrix $\Phi$. The measurements of the interpolated signal with respect to matrix $\Phi$ are now available.

\vspace*{-0.1cm}
\subsection{Shift retrieval and registration}
\vspace*{-0.1cm}
The problem of integer shift retrieval in the compressed domain was considered in \cite{EldarShift}. The authors showed that the test $\max_s \mathfrak{R}\left\lbrace \langle \mathbf{z}, \Phi D^s \Phi^\star \mathbf{v} \rangle \right\rbrace$ exactly recovers the integer shift $s$ from as low as one measurement ($\mathbf{z}$ and $\mathbf{v}$ are the measurements of the signal and of its shifted version using cyclic shift operator $D^s$), given some conditions on $\Phi$. The partial Fourier matrix was considered as an example of sensing matrix satisfying the prescribed properties. We notice that circulant matrices do not satisfy the conditions listed in Theorem 1 in \cite{EldarShift}; in particular, $\alpha\Phi^\star\Phi \neq I$ for any scaling factor $\alpha$. However, we can exploit Theorem \ref{thm:partial_commutator} to perform shift retrieval in a different way. One of the implications of Theorem \ref{thm:partial_commutator} is that a shift in the signal corresponds to a shift in the measurements by the same amount, being the cyclic shift operator $D^s$ a circulant matrix, with the corruption of $s$ measurements.

\begin{theorem}
Let $\Phi \in \mathbb{R}^{m \times n}$ be a partial circulant matrix. Let $\mathbf{z}=\Phi\mathbf{x}$ and $\mathbf{v}=\Phi \mathbf{x}_{\rightarrow s^\star}$. If $\vert s^\star \vert < m$, then \eqref{csshiftretrieval} retrieves the correct shift $\hat{s}=s^\star$.
\begin{align}
\label{csshiftretrieval}
\hat{s} = \arg\min_s \Vert \tilde{\mathbf{z}} - \tilde{\mathbf{v}} \Vert_2
\end{align}
with $\tilde{\mathbf{z}} = \mathbf{z}_{[1,m-s]}$ and $\tilde{\mathbf{v}} = \mathbf{v}_{[s+1,m]}$ for $s\geq 0$, or $\tilde{\mathbf{z}} = \mathbf{z}_{[\vert s \vert+1,m]}$ and $\tilde{\mathbf{v}} = \mathbf{v}_{[1,m-\vert s \vert]}$ for $s<0$.
\end{theorem}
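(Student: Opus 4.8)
The plan is to show that the objective in \eqref{csshiftretrieval} attains the value zero precisely at $s=s^\star$, so that $s^\star$ is the (unique) global minimizer. The engine is the very measurement-to-measurement shift correspondence underlying Theorem \ref{thm:partial_commutator}, specialized to the case in which the ``filter'' is the cyclic shift operator $D^{s^\star}$, which is itself a circulant matrix; indeed $\mathbf{v}=\Phi\mathbf{x}_{\rightarrow s^\star}=\Phi D^{s^\star}\mathbf{x}$, so this is exactly the filtering situation already analyzed.

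First I would make the correspondence explicit at the level of individual measurements. Writing $\boldsymbol{\phi}^{(i)}$ for the $i$-th row of $\Phi$, the partial-circulant structure says that $\boldsymbol{\phi}^{(i)}$ is the seed circularly right-shifted by $i-1$ positions. Using the elementary identity $\langle \mathbf{a},\mathbf{b}_{\rightarrow s}\rangle = \langle \mathbf{a}_{\rightarrow -s},\mathbf{b}\rangle$ for circular shifts, I would compute, for $s^\star\ge 0$,
\begin{align*}
v_i = \langle \boldsymbol{\phi}^{(i)}, \mathbf{x}_{\rightarrow s^\star}\rangle = \langle \boldsymbol{\phi}^{(i)}_{\rightarrow -s^\star}, \mathbf{x}\rangle = z_{i-s^\star},
\end{align*}
valid whenever $1\le i-s^\star\le m$, i.e. for $i\in[s^\star+1,m]$, because in that range $\boldsymbol{\phi}^{(i)}_{\rightarrow -s^\star}$ coincides exactly with row $i-s^\star$ of $\Phi$. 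This is the per-entry version of the partial commutation of Theorem \ref{thm:partial_commutator}: the $s^\star$ entries with $i\le s^\star$ are the ``corrupted'' ones, where the back-shifted seed wraps around and no longer matches any row of $\Phi$. Consequently, at the candidate $s=s^\star$ the two truncations coincide entrywise, $\tilde{\mathbf{v}}_k = v_{s^\star+k} = z_k = \tilde{\mathbf{z}}_k$ for $k\in[1,m-s^\star]$, so $\Vert\tilde{\mathbf{z}}-\tilde{\mathbf{v}}\Vert_2 = 0$. The hypothesis $\vert s^\star\vert<m$ is precisely what guarantees the overlap window $[1,m-s^\star]$ is nonempty, so this zero is attained on a well-defined truncation; the case $s^\star<0$ follows symmetrically by interchanging the roles of $\mathbf{z}$ and $\mathbf{v}$, which is why \eqref{csshiftretrieval} swaps the two index windows according to the sign of $s$. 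Note also that the window for the true shift excludes exactly the $s^\star$ corrupted entries of $\mathbf{v}$, so no spurious garbage enters the residual.

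Since $\Vert\cdot\Vert_2\ge 0$, this already shows $s^\star$ is a global minimizer achieving the minimum value zero. The remaining — and genuinely delicate — step is uniqueness: ruling out a spurious $s\ne s^\star$ that also drives the objective to zero. Substituting $v_i=z_{i-s^\star}$ into the truncation for a candidate $s$ with $s>s^\star$, the $k$-th residual becomes $z_k - z_{k-(s^\star-s)}$, so a spurious zero would force the measurement sequence $\mathbf{z}$ to be periodic with period $\vert s^\star-s\vert$ over the whole overlap; for $s<s^\star$ the window $\mathbf{v}_{[s+1,m]}$ additionally drags in corrupted entries, which generically prevents a vanishing residual.

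This periodicity requirement is the crux and the main obstacle: it cannot be excluded for a completely arbitrary signal without a non-degeneracy hypothesis. I would therefore close the argument by invoking genericity — for a random seed and a generic (e.g., non-periodic) signal $\mathbf{x}$, such an accidental self-similarity of the measurement vector occurs with probability zero — and conclude that the minimizer is unique and equals $s^\star$. The honest statement is thus that $s^\star$ is always a minimizer with objective zero, and is the unique one under this mild genericity assumption.
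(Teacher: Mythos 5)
Your proof follows essentially the same route as the paper's: both establish that the truncated measurement windows coincide exactly at $s=s^\star$ (the paper phrases this via rows of the full circulant matrix $C$, writing $\tilde{\mathbf{v}}=C_{[s^\star+1,m]}\mathbf{x}_{\rightarrow s^\star}=C_{[1,m-s^\star]}\mathbf{x}=\tilde{\mathbf{z}}$, which is your per-entry identity $v_i=z_{i-s^\star}$ in matrix form), and both then argue that no other $s$ can achieve zero. The one substantive difference is at the uniqueness step: the paper simply asserts that for $s\neq s^\star$ the two truncations involve distinct row-windows of $C$ and hence ``by construction of $C$'' cannot be equal, whereas you correctly observe that this requires a non-degeneracy condition --- a periodic $\mathbf{x}$ (e.g., constant) or an unlucky seed would defeat the test --- so your explicit genericity caveat is more honest than the paper's proof, which silently carries the same gap.
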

\begin{proof}
Let us call $C \in \mathbb{R}^{n\times n}$ the square circulant matrix having the same seed as $\Phi$. Suppose that $0 < s^\star < m$. If $s=s^\star$, we have $\tilde{\mathbf{v}}=C_{\left[ s^\star+1,m \right]}\mathbf{x}_{\rightarrow s^\star} = C_{\left[1,m-s^\star \right]}\mathbf{x}$, and $\tilde{\mathbf{z}}= C_{\left[ 1,m-s^\star \right]}\mathbf{x}$, hence $\tilde{\mathbf{v}}=\tilde{\mathbf{z}}$, so $\Vert \tilde{\mathbf{z}} - \tilde{\mathbf{v}} \Vert_2 =0$. If $s>s^\star$ or $s<s^\star$ we have $\tilde{\mathbf{v}}=C_{\left[ s+1,m \right]}\mathbf{x}_{\rightarrow s^\star} = C_{\left[s-s^\star+1,m \right]}\mathbf{x}$ and $\tilde{\mathbf{v}}=\left[ C_{\left[ m-s^\star,m \right]}^T C_{\left[ 1,m-s^\star-s \right]}^T  \right]^T \mathbf{x}$, respectively and $\tilde{\mathbf{z}}= C_{\left[ 1,m-s \right]}\mathbf{x}$ in both cases. Hence, by construction of $C$, we have $\tilde{\mathbf{v}}\neq\tilde{\mathbf{z}}$. The same reasoning applies for $-m < s^\star < 0$. Thus, we proved retrieval of the correct shift. When $\vert s^\star \vert \geq m$, $\tilde{\mathbf{v}}$ and $\tilde{\mathbf{z}}$ cannot be constructed because $\tilde{\mathbf{v}}$ and $\tilde{\mathbf{z}}$ contain measurements coming from two disjoint submatrices of C.
\end{proof}
\vspace*{-0.2cm}
Notice that thanks to compressive interpolation (section \ref{cmp_interpolation}), it is also possible to retrieve sub-integer shifts; we leave this as future work.

A different problem is registration. If we know the shift $s$, we can register a signal, meaning that we can compute the measurements of the shifted version simply by shifting the measurement vector. If want to compute the measurements with respect to the original matrix $\Phi$, then $s$ measurements will be corrupted by the registration. Otherwise, we can suppose that the registered measurements are measurements acquired with a different sensing matrix $\Phi'$, which is circulant, having as seed the $(m-s+1)$-th row of the full circulant matrix, that the partial $\Phi$ was extracted from.  

\vspace*{-0.1cm}
\subsection{Compressive wavelet transform}
\vspace*{-0.1cm}
We show that it is possible to obtain the measurements of the wavelet coefficients from the measurements of the signal directly in the compressed domain. Changing the signal domain after the sensing process can be useful in several ways. In \cite{ExploitingWaveletCS}, the authors propose a technique to improve the quality of the reconstruction of signals compressible in the wavelet domain by exploiting the tree-based structure within the wavelet coefficients. The method requires CS measurements of the wavelet coefficients in order to be applied. However, in many scenarios it is not practical to sense the wavelet coefficients, either because the sensing is implemented in a low-complexity, low-power hardware so that calculating the transform would be computationally and energetically costly or because specialized hardware directly acquires random projections of the data (\emph{e.g.}, \cite{RMPI}). Moreover, being able to compute the measurements of the wavelet coefficients, which typically hold a sparse or compressible representation of the signal, allows to recover them directly, thus avoiding any issues regarding the coherence \cite{CandesIncoherence} of sensing matrix and spasifying wavelet dictionary.

In order to implement the wavelet transform in the compressed domain we exploit the lifting scheme \cite{SweldensLifting}. The lifting method to compute the wavelet transform consists in a sequence of lifting steps, each composed by two filters operating on two subsequences. Initially, the two subsequences are the even and odd samples of the signal; the predict filter is used to predict the odd sequence from the even one, and the difference is then passed through the update filter and the result summed back to the even sequence. It can be noticed that this framework can be mapped in the compressed domain because the filtering steps can be implemented as explained in the previous sections under the choice of periodic boundary conditions. We thus only need to impose a structured sensing that acquires measurements of the even and odd subsequences separately. Let us call $\Phi^{(e)}$ and $\Phi^{(o)}$ the sensing matrices used to acquire the even and odd subsequences, respectively. Both have size $m\times n$, but the even (odd) columns are zeroed in $\Phi^{(o)}$ ($\Phi^{(e)}$). Moreover, the sum $\Phi^{(e)} + \Phi^{(o)}$ is a circulant matrix. The measurements of the even and odd sequences form the two input sequences in the compressive scheme, depicted in Fig. 3. The prediction step applies the prediction filter to the even sequence and the output is left-shifted by 1. The update step applies the update filter and right-shifts the output by 1. This is the structure of a single lifting step, which may be repeated depending on the particular transform to be implemented. Finally, gains are applied and the two sequences are summed. It can be shown that the final output is $\Phi\boldsymbol{\theta}$ being $\boldsymbol{\theta}$ the vector of wavelet coefficients.  We remark that, as explained in section \ref{sec:singlenode}, $2(N_f-1)+2(N_f-1)$ measurements per lifting step are corrupted. In more detail, let us consider a practical example taken from \cite{taubmanjpeg2000}. The spline 5/3 biorthogonal transform has the following prediction and update filters $\lambda_1(z)= -\frac{1}{2}(1+z)$ and $\lambda_2(z)= \frac{1}{4}(1+z^{-1})$ and gain factors $K_0=1$, $K_1=\frac{1}{2}$. Applying the compressive transform procedure we can obtain the measurements of the wavelet coefficients with a corruption of the first 2 and last 2 measurements, which will be discarded. 

\begin{figure}[t]
\begin{minipage}[b]{1.0\linewidth}
\centerline{\includegraphics[width=0.93\columnwidth]{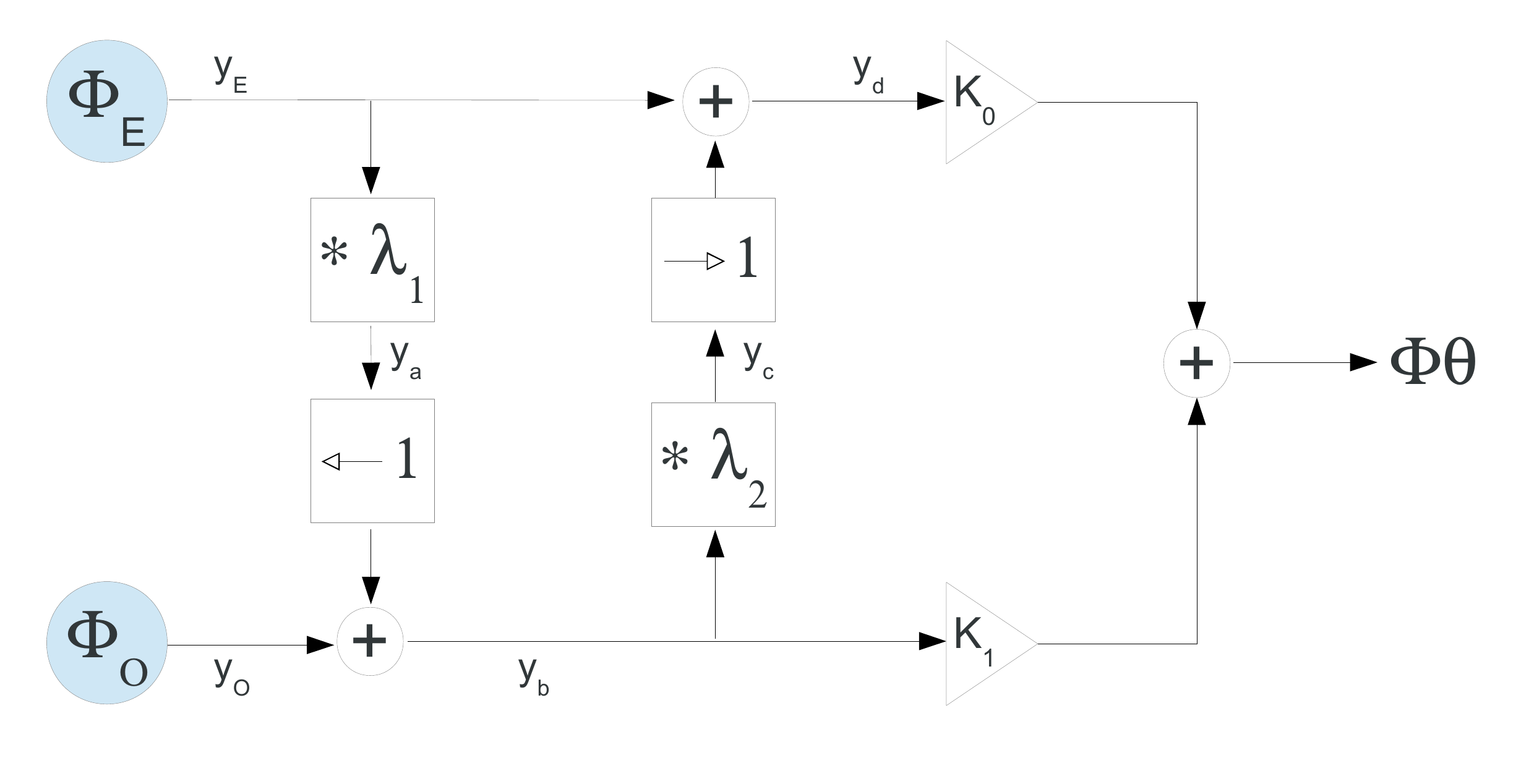}}
\vspace{-0.5cm}
\caption{Compressive wavelet transform}\medskip
\end{minipage}
\label{compressive_wavelet}
\vspace*{-1.0cm}
\end{figure}

\vspace*{-0.3cm}
\section{CONCLUSIONS}
\vspace*{-0.2cm}
In this paper we have shown how endowing the sensing matrix with circulant properties (\emph{e.g.}, fully circulant as in the single-node case or block-circulant as in the multi-node case) allows to translate a variety of classical signal processing tasks to the reduced dimensionality domain in an exact form or corrupting few measurements. We have discussed some applications of the presented paradigm, including various forms of filtering, shift retrieval and registration and a technique to transform the measurements of a signal into the measurements of its wavelet coefficients. Future directions for the work presented in this paper include extending the theory to 2D signals.

% Below is an example of how to insert images. Delete the ``\vspace'' line,
% uncomment the preceding line ``\centerline...'' and replace ``imageX.ps''
% with a suitable PostScript file name.
% -------------------------------------------------------------------------
%\begin{figure}[htb]
%
%\begin{minipage}[b]{1.0\linewidth}
%  \centering
%  \centerline{\includegraphics[width=8.5cm]{image1}}
%%  \vspace{2.0cm}
%  \centerline{(a) Result 1}\medskip
%\end{minipage}
%%
%\begin{minipage}[b]{.48\linewidth}
%  \centering
%  \centerline{\includegraphics[width=4.0cm]{image3}}
%%  \vspace{1.5cm}
%  \centerline{(b) Results 3}\medskip
%\end{minipage}
%\hfill
%\begin{minipage}[b]{0.48\linewidth}
%  \centering
%  \centerline{\includegraphics[width=4.0cm]{image4}}
%%  \vspace{1.5cm}
%  \centerline{(c) Result 4}\medskip
%\end{minipage}
%%
%\caption{Example of placing a figure with experimental results.}
%\label{fig:res}
%%
%\end{figure}

% To start a new column (but not a new page) and help balance the last-page
% column length use \vfill\pagebreak.
% -------------------------------------------------------------------------
%\vfill
%\pagebreak

% References should be produced using the bibtex program from suitable
% BiBTeX files (here: strings, refs, manuals). The IEEEbib.bst bibliography
% style file from IEEE produces unsorted bibliography list.
% -------------------------------------------------------------------------

\newpage

\end{document}